\documentclass[11pt]{amsart}
\usepackage{amsmath,amssymb,amsbsy,amsfonts,amsthm,latexsym,
            amsopn,amstext,amsxtra,euscript,amscd,color,mathrsfs}

\hoffset -1.5cm

\voffset -1cm

\textwidth 15.5truecm

\textheight 22.5truecm
\PassOptionsToPackage{hyphens}{url}\usepackage{hyperref}
 
 \usepackage[capbesideposition=outside,capbesidesep=quad]{floatrow}

\captionsetup{labelfont=bf, justification=justified, singlelinecheck=false, position=above}

\restylefloat{table}
\restylefloat{table}
            
\usepackage{multirow,caption}
            
\usepackage{amscd}
\usepackage{color,enumerate}

\newcommand{\RNum}[1]{\lowercase\expandafter{\romannumeral #1\relax}}

\setlength{\marginparwidth}{2.2cm}
\usepackage[colorinlistoftodos,prependcaption,textsize=tiny]{todonotes}

\newtheorem{thm}{Theorem}[section]
\newtheorem{lem}[thm]{Lemma}

\newtheorem{thm-con}[thm]{Theorem-Conjecture}
\numberwithin{equation}{section}

\theoremstyle{definition}

\def\cB{{\mathcal B}}
\def\cC{{\mathcal C}}

\newcommand{\F}{\mathbb{F}} 
\def\Tr{{\rm Tr}}

\begin{document}
\title[ Some new classes of (almost) perfect $c$-nonlinear permutations]{ Some new classes of (almost) perfect $c$-nonlinear permutations}

\author[M. Pal]{Mohit Pal}
\address{Department of Informatics, University of Bergen, PB 7803, 5020, Bergen, Norway}
\email{mohit.pal@uib.no}

\begin{abstract}
The concept of differential uniformity was recently extended to the $c$-differential uniformity. An interesting problem in this area is the construction of functions with low $c$-differential uniformity and a lot of research has been done in this direction in the recent past. Here, we present three classes of  (almost) perfect $c$-nonlinear permutations over finite fields of even characteristic.
\end{abstract}

\keywords{Finite fields, permutation polynomials, $c$-differential uniformity}
\subjclass[2010]{11T06, 94A60}
\maketitle 

\section{Introduction}
Let $\F_{q^n}$ be the finite field with $q^n$ elements, where $q$ is a prime power and $n$ is a positive integer. We denote by $\F_{q^n}^*$, the multiplicative cyclic group of non-zero elements of $\F_{q^n}$. Let $f$ be a function from $\F_{q^n}$ to itself. Lagrange's interpolation formula allows us to write $f$ as a polynomial in $\F_{q^n}[X]/ (X^{q^n}-X)$ uniquely. Thus, we shall use the term function and polynomial for $f$ interchangeably. A polynomial $f\in \F_{q^n}[X]$ is called a permutation polynomial if the associated mapping $c \mapsto f(c)$ permutes the elements of $\F_{q^n}$. Here we consider permutation polynomials of the shape $f(X) = \gamma X+ \Tr_{q^n/q}(X^k)$, where $\Tr_{q^n/q} : \F_{q^n} \rightarrow \F_q$ is the trace map given by $\Tr_{q^n/q}(X)= \sum_{i=0}^{n-1}X^{q^i}$ and $\gamma \in \F_{q^n}^*$. These permutation polynomials are interesting because of their simple algebraic structure. Charpin and Kyureghyan~\cite{CK2008} first studied the polynomials of the form $f$ and gave a complete characterization of permutation polynomials for $q=2$. In 2016, Kyureghyan and Zieve~\cite{KZ2016} studied permutation polynomials of the shape $X+\gamma \Tr_{q^n/q}(X^k)$ over $\F_{q^n}$ for odd $q$ and constructed several classes of permutation polynomials. Motivated by the results of~\cite{CK2008, KZ2016}, Li et al.~\cite{LQCL2018} studied polynomials of the form $\gamma X+ \Tr_{q^n/q}(X^k)$ over finite fields of even characteristic. To avoid repetitive work from~\cite{CK2008}, the authors in~\cite{LQCL2018} restricted themselves to the case $q>2$ and constructed fifteen new classes of permutation polynomials. For more details on permutation polynomials of the shape  $\gamma X+ \Tr_{q^n/q}(X^k)$, reader may refer to ~\cite{CK2008, GK2019, K2011, KZ2016, LQCL2018, MG2017} and references therein.

Permutation polynomials have wide applications in the areas of mathematics and engineering such as coding theory~\cite{DH13, LC07}, cryptography~\cite{LM84, SH98} and combinatorial designs~\cite{DY06}. In symmetric key cryptography, permutation polynomials are widely applied to the design of block ciphers, where they are used in the design of so-called substitution boxes. In fact, permutation polynomials must be choosen in a way so as S-box becomes immune to some of the known attacks against block ciphers. For instance, the differential attack introduced by Biham and Shamir~\cite{BS1991} is one of the most powerful attacks on block ciphers. In 1993, Nyberg~\cite{N1993} introduced the notion of differential uniformity to quantify the immunity of a function against differential attacks. Recently, Ellingsen et al.~\cite{EFRST2020} generalised the notion of differential uniformity and defined the $c$-differential uniformity of a function $f$ for any $c\in \F_{q^n}$. For any $a,c \in \F_{q^n}$, the $c$-derivative of $f$ in the direction of $a$ is given by $_cD_f(a,X):= f(X+a)-cf(X)$ for all $X \in \F_{q^n}$. For any $a,b \in \F_{q^n}$, the $c$-Difference Distribution Table ($c$-DDT) entry at point $(a,b)$ of $f$ is given by $ _c\Delta_f(a,b) := \lvert \{ X \in \F_{q^n} \mid {_cD}_f(a,X) =b \} \rvert$. The $c$-differential uniformity of $f$ is defined as $_c\Delta_f := \max \{ _c\Delta_f(a,b) \mid a, b \in \F_{q^n}~\mbox{and}~a \neq 0~\mbox{when}~c=1 \}$. When $_c\Delta_f=1,2,$ then the function $f$ is called perfect $c$-nonlinear (P$c$N) function and almost perfect $c$-nonlinear (AP$c$N) function, respectively. When $c=1$, then the notions of $c$-DDT and $c$-differential uniformity coincide with the classical notions of DDT and differential uniformity, respectively.

Since the appearence of the notion of the $c$-differential uniformity, several functions having low $c$-differential uniformity have been constructed in the recent past. The most recent list of power maps along with their $c$-differential uniformity can be found in~\cite[Table 1]{LRS2022}, and the list of non-power maps with their $c$-differential uniformity is given in~\cite[Table 2]{LRS2022} for $c \neq 1$. However, over finite fields of even characteristic, there are only few P$c$N and AP$c$N functions, see, e.g.,~\cite{EFRST2020,HPS22,JKK2022, MRSYZ21,TZJT}. In this paper, we shall restrict ourselves to finite fields of even characteristic and study the $c$-differential uniformity of three classes of permutation polynomials of the shape $\gamma X+ \Tr_{q^n/q}(X^k)$ over $\F_{q^n}$ for $c \in \F_{q^n} \backslash \{1\}$. The rationale behind excluding the case $c=1$ is the fact that in this case linear term outside the trace function vanishes and the varible $X$ remains only in the trace term. As a consequence, polynomials of the form $\gamma X+ \Tr_{q^n/q}(X^k)$ will have very high differential uniformity.

The rest of the paper is organised as follows. In Section~\ref{S2}, we consider the $c$-differential uniformity of the permutation polynomial $f(X):= \gamma X +\Tr_{q^{2n}/q} (X^{2^i(q+1)})$, where $\gamma \in  \F_{q^2}^*$, $q=2^t$ and $i,n, t >0$ are integers. We show that $f$ is P$c$N for all $c\in \F_{q^2} \backslash \F_2$ and if $\gcd(i,t)=1$ then $_c\Delta_f \leq 2$ for all $c\in \F_{q^{2n}} \backslash \F_{q^2}$. Since we are considering only permutation polynomials, and for $c=0$ every permutation polynomial is P$c$N, we have excluded the trivial case $c=0$ in rest of the paper. In Section~\ref{S3}, we shall show that the permutation polynomial $g(X):= \gamma X +\Tr_{q^{2n}/q} (X^{q^2+1})$, where $\gamma \in  \F_{q}^*$, $q=2^t$ and $n, t >0$ are integers, is P$c$N for all $c\in \F_{q^{2n}} \backslash \F_2$. Section~\ref{S4} is devoted to the $c$-differential uniformity of the permutation polynomial $h(X):= \gamma X +\Tr_{q^{2n}/q} (X^{2^i(q^2+1)})$, where $\gamma \in  \F_{q^2}^*$, $q=2^t$ and $i,n, t >0$ are integers and either $n$ is even; or $n$ is odd and $\displaystyle \Tr_{q^2/q} \left( \left( \frac{1}{\gamma} \right)^{2^{i+1}} \right)^{\frac{q-1}{\gcd(2^{i+1}-1, 2^t-1)}} \neq 1$. We show that $h$ is P$c$N for all $c\in \F_{q^2} \backslash \F_2$, and if $\gcd(i,t)=1$ then the $c$-differential uniformity of $h$ is $\leq 2$ for all $c\in \F_{q^{2n}} \backslash \F_{q^2}$. Finally, we summarize the paper with some concluding remarks in Section~\ref{S5}.

\section{The $c$-differential uniformity of the function $f$}\label{S2}
In this section, we shall consider the $c$-differential uniformity of the function $f(X)= \gamma X +\Tr_{q^{2n}/q} (X^{2^i(q+1)})$, where $\gamma \in  \F_{q^2}^*$, $q=2^t$ and $i,n, t >0$ are integers. It has been proved in~\cite[Theorem 3.1]{LQCL2018} that $f$ is a permutation polynomial over $\F_{q^{2n}}$. We first recall a lemma that will be used throughout this paper.
\begin{lem}\cite{JL1978}
\label{L1}
 Let $q=p^t$, $i$ be a positive integer and $f(X)= X^{p^i}-\alpha X-\beta$ where $\alpha \in \F_q^*$ and $\beta \in \F_q$. Then, in the field $\F_q$, $f$ has either zero, one or $p^d$ roots where $d=\gcd(i,t)$.
\end{lem}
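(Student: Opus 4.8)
The plan is to treat $f$ as an affine $p$-polynomial and reduce the root count to an analysis of the kernel of its linear part. First I would set $L(X) := X^{p^i} - \alpha X$, so that $f(X) = L(X) - \beta$ and the roots of $f$ in $\F_q$ are exactly the solutions of $L(X) = \beta$. Since $X \mapsto X^{p^i}$ is the $i$-th power of the Frobenius and $X \mapsto \alpha X$ is additive, $L$ is an $\F_p$-linear endomorphism of $\F_q$. Hence its kernel $\ker L := \{X \in \F_q : L(X)=0\}$ is an $\F_p$-subspace, and the solution set of $L(X)=\beta$ is either empty (when $\beta \notin \im L$) or a coset $X_0 + \ker L$ for any particular solution $X_0$. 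In either case the number of roots of $f$ is $0$ or $\lvert \ker L \rvert$, so it remains only to show $\lvert \ker L \rvert \in \{1, p^d\}$.

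To pin down $\ker L$, observe that $X \in \ker L$ means $X^{p^i} = \alpha X$. If $\ker L = \{0\}$, then $\lvert \ker L \rvert = 1 = p^0$ and $f$ has a single root. Otherwise I fix a nonzero $Y_0 \in \ker L$. The key step is that, for any nonzero $X \in \ker L$, the ratio $z := X Y_0^{-1}$ satisfies $z^{p^i} = X^{p^i}(Y_0^{p^i})^{-1} = (\alpha X)(\alpha Y_0)^{-1} = X Y_0^{-1} = z$, where $\alpha \neq 0$ is used to invert $\alpha Y_0$. Thus $z$ is fixed by the Frobenius $X \mapsto X^{p^i}$, and since $z \in \F_q = \F_{p^t}$ we get $z \in \F_{p^i} \cap \F_{p^t} = \F_{p^{\gcd(i,t)}} = \F_{p^d}$; together with $0 = 0\cdot Y_0$ this gives $\ker L \subseteq \F_{p^d} Y_0$. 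Conversely, because $d \mid i$ we have $\F_{p^d} \subseteq \F_{p^i}$, so each $\lambda \in \F_{p^d}$ obeys $\lambda^{p^i} = \lambda$, whence $(\lambda Y_0)^{p^i} = \lambda Y_0^{p^i} = \lambda(\alpha Y_0) = \alpha(\lambda Y_0)$ and $\lambda Y_0 \in \ker L$. Therefore $\ker L = \F_{p^d} Y_0$, which has exactly $p^d$ elements since $Y_0 \neq 0$.

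Combining the two steps, the number of roots of $f$ in $\F_q$ equals $0$, $1$, or $p^d = p^{\gcd(i,t)}$, as claimed. I expect the only delicate point to be the kernel computation: one must verify both the inclusion $\ker L \subseteq \F_{p^d} Y_0$, which rests on recognizing ratios of kernel elements as lying in the fixed field $\F_{p^i} \cap \F_{p^t} = \F_{p^d}$, and the reverse inclusion, which uses $\gcd(i,t) \mid i$ to keep $\F_{p^d}$-multiples inside the kernel. The hypothesis $\alpha \neq 0$ enters precisely in the ratio argument, since it guarantees $\alpha Y_0 \neq 0$ and makes the cancellation of $\alpha$ meaningful.
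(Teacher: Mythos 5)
Your proof is correct, but there is nothing in the paper to compare it against: the paper states this lemma as a quoted result from Liang~\cite{JL1978} and gives no proof of it, so your argument stands on its own. What you give is the standard linearization proof, and every step checks out. Writing $f(X)=L(X)-\beta$ with $L(X)=X^{p^i}-\alpha X$, the map $L$ is indeed an $\F_p$-linear endomorphism of $\F_q$ (additivity from characteristic $p$, and $c^{p^i}=c$ for $c\in\F_p$), so the root set of $f$ is empty or a coset of $\ker L$, reducing everything to the size of the kernel. Your ratio argument is the key point and is handled correctly: for nonzero $X, Y_0\in\ker L$ the quotient $z=XY_0^{-1}$ satisfies $z^{p^i}=z$, hence lies in $\F_{p^i}\cap\F_{p^t}=\F_{p^{\gcd(i,t)}}=\F_{p^d}$, giving $\ker L\subseteq \F_{p^d}Y_0$; the reverse inclusion correctly invokes $d\mid i$ so that every $\lambda\in\F_{p^d}$ is fixed by $X\mapsto X^{p^i}$. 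One minor point you could make explicit: in the trivial-kernel case, your claim that $f$ has \emph{exactly} one root uses that an injective $\F_p$-linear endomorphism of the finite-dimensional space $\F_q$ is automatically surjective; for the lemma as stated (zero, one, or $p^d$ roots) even this is not needed, since the coset argument alone bounds the count by $\lvert\ker L\rvert\in\{1,p^d\}$. Your proof also makes transparent a fact the bare statement hides, namely that the counts $1$ and $p^d$ cannot both occur for the same $\alpha$: the number of roots is $0$ or $\lvert\ker L\rvert$, with $\lvert\ker L\rvert$ depending only on $\alpha$, not on $\beta$.
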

The following theorem gives the $c$-differential uniformity of the function $f$ for all $c \in \F_{q^{2n}} \backslash \F_2$.

\begin{thm}\label{T1}
Let $q=2^t$ and $f(X)= \gamma X +\Tr_{q^{2n}/q} (X^{2^i(q+1)})$, where $\gamma \in  \F_{q^2}^*$ and $i,n, t >0$ are integers. Then
\begin{enumerate}[(i)]
 \item $f$ is P$c$N over $\F_{q^{2n}}$ for all $c\in \F_{q^2} \backslash \F_2$,
 \item $_c\Delta_f \leq 2$ for all $c\in \F_{q^{2n}} \backslash \F_{q^2}$ when $\gcd(i,t)=1$.
\end{enumerate}

\end{thm}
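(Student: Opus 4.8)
The plan is to compute the $c$-derivative explicitly and reduce the counting of its preimages to a single linearized equation over $\F_q$, to which Lemma~\ref{L1} applies. Since $q=2^t$, I would write ${}_cD_f(a,X)=f(X+a)+cf(X)$. Using $X^{2^i(q+1)}=(X^{q+1})^{2^i}$ and $(X+a)^{q+1}=X^{q+1}+aX^q+a^qX+a^{q+1}$, together with the characteristic-two identities $\Tr_{q^{2n}/q}(Z^{2^i})=\Tr_{q^{2n}/q}(Z)^{2^i}$ and $\Tr_{q^{2n}/q}(Z^q)=\Tr_{q^{2n}/q}(Z)$, the equation ${}_cD_f(a,X)=b$ rewrites as
\[
\gamma(1+c)X+(1+c)T(X)+L(X)^{2^i}=b',
\]
where $T(X)=\Tr_{q^{2n}/q}(X^{2^i(q+1)})$, $L(X)=\Tr_{q^{2n}/q}(aX^q+a^qX)$ is $\F_q$-linear, and $b'=b+\gamma a+\Tr_{q^{2n}/q}(a^{2^i(q+1)})$. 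Both $T(X)$ and $L(X)$ take values in $\F_q$, and $\gamma(1+c)\neq0$ because $\gamma\neq0$ and $c\neq1$.

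Next I would parametrize solutions by the pair $(u,v):=(T(X),L(X))\in\F_q^2$. The displayed equation gives $X=\gamma^{-1}u+(\gamma(1+c))^{-1}(v^{2^i}+b')=:\gamma^{-1}u+w$, and a solution corresponds to a pair $(u,v)$ for which the consistency conditions $T(X)=u$ and $L(X)=v$ hold. The decisive point is that these two conditions \emph{decouple}. Because $\gamma\in\F_{q^2}^*$ one has $\gamma^{-q^2}=\gamma^{-1}$, whence $L(\gamma^{-1})=0$ and, in the expansion of $\Tr_{q^{2n}/q}((\gamma^{-1}u+w)^{q+1})$, the term linear in $u$ vanishes; moreover $\Tr_{q^{2n}/q}(\gamma^{-(q+1)})=2n\,\gamma^{-(q+1)}=0$ kills the $u^2$ term since $2n$ is even and $\gamma^{-(q+1)}\in\F_q$. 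Consequently $T(X)=T(w)$ depends only on $v$, so the condition $T(X)=u$ merely defines $u$ from $v$, while the condition $L(X)=v$ reduces to the single equation
\[
v+\theta\,v^{2^i}=\rho,\qquad \theta:=L\big((\gamma(1+c))^{-1}\big),\quad \rho:=L\big((\gamma(1+c))^{-1}b'\big),
\]
with $\theta,\rho\in\F_q$. Thus ${}_c\Delta_f(a,b)$ equals the number of $v\in\F_q$ solving this equation.

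Finally I would evaluate $\theta$. Writing $Z=(\gamma(1+c))^{-1}$ and using $\Tr_{q^{2n}/q}(aZ^q)=\Tr_{q^{2n}/q}(a^qZ^{q^2})$ yields $\theta=\Tr_{q^{2n}/q}\!\big(a^q(Z^{q^2}+Z)\big)$. For part~(i), $c\in\F_{q^2}\setminus\F_2$ forces $Z\in\F_{q^2}$, so $Z^{q^2}=Z$ and $\theta=0$; then $v=\rho$ is the unique solution, so ${}_c\Delta_f(a,b)=1$ for every $a,b$, i.e.\ $f$ is P$c$N. For part~(ii), $c\in\F_{q^{2n}}\setminus\F_{q^2}$: if $\theta=0$ there is a unique $v$, and if $\theta\neq0$ the equation becomes $v^{2^i}-\theta^{-1}v-\theta^{-1}\rho=0$ with $\theta^{-1}\in\F_q^*$, so Lemma~\ref{L1} bounds the number of roots in $\F_q$ by $2^{\gcd(i,t)}=2$. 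Hence ${}_c\Delta_f\leq2$.

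I expect the main obstacle to be verifying the cancellations in the second paragraph that decouple the system: that the $u$-linear and $u^2$ contributions to $T(X)$ and the coefficient $L(\gamma^{-1})$ all vanish. These rest precisely on $\gamma\in\F_{q^2}$ and on $2n$ being even, and once they are in place the gcd hypothesis enters only through Lemma~\ref{L1}.
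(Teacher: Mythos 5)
Your proof is correct and rests on the same engine as the paper's: parametrize solutions by the trace values $(u,v)$, exploit $\gamma\in\F_{q^2}^*$ and characteristic two to make the system decouple, and reduce the count of solutions to a trinomial equation $\theta v^{2^i}+v=\rho$ over $\F_q$, which Lemma~\ref{L1} bounds. The only difference is organizational: the paper splits into three cases according to whether $c$ lies in $\F_q\setminus\F_2$, $\F_{q^2}\setminus\F_q$, or $\F_{q^{2n}}\setminus\F_{q^2}$ (plus a separate treatment of $a\in\F_q^*$), whereas you run one uniform computation and recover all the P$c$N cases at once from the single observation that $\theta=\Tr_{q^{2n}/q}\bigl(a^q(Z^{q^2}+Z)\bigr)$ vanishes exactly because $Z=(\gamma(1+c))^{-1}\in\F_{q^2}$ whenever $c\in\F_{q^2}\setminus\F_2$.
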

\begin{proof}
Let $c \in \F_{q^{2n}}\backslash \F_2 $. Recall that, for any $(a,b) \in \F_{q^{2n}} \times \F_{q^{2n}}$, the $c$-DDT entry $_c \Delta_f (a,b)$ is given by the number of solutions $X \in \F_{q^{2n}}$ of the following equation.
\begin{equation} \label{T1E1}
 f(X+a)+cf(X)= b.
\end{equation} 
It is easy to see that when $a=0$ then for each $b \in \F_{q^{2n}}$, Equation~\eqref{T1E1} has a unique solution $X \in \F_{q^{2n}}$. For $a \neq 0$, Equation~\eqref{T1E1} can be written as
\begin{equation}\label{T1E2}
\begin{split}
 \gamma X +\Tr_{q^{2n}/q} (X^{2^i(q+1)}) + C\cdot  \Tr_{q^{2n}/q}((X^qa+a^qX)^{2^i})&= B,
\end{split}
\end{equation}
where $\displaystyle B = C ( b + \gamma a +\Tr_{q^{2n}/q} (a^{2^i(q+1)})) $ and $\displaystyle C = \frac{1}{1+c} $. Notice that, when $a \in \F_q^*$, then the above equation reduces to 
\begin{align*}
B &= \gamma X +\Tr_{q^{2n}/q} (X^{2^i(q+1)}) + a^{2^i}C \cdot \Tr_{q^{2n}/q}((X^{2^i})^q+X^{2^i})= \gamma X +\Tr_{q^{2n}/q} (X^{2^i(q+1)}),
\end{align*}
which has a unique solution for each $B \in \F_{q^{2n}}$ (and hence for each $b \in \F_{q^{2n}}$). Now, let $a \in \F_{q^{2n}} \backslash \F_q$. Then, from Equation~\eqref{T1E2} we have
\begin{equation}\label{T1E3}
 \gamma X +\Tr_{q^{2n}/q} (X^{2^i(q+1)}) + C \cdot \Tr_{q^{2n}/q}\left(((X^{q^2}+X)a^q)^{2^i}\right)= B.
\end{equation}
We shall now study the solutions of Equation~\eqref{T1E3}, in three different cases, namely, $c \in \F_q \backslash \F_2$, $c \in \F_{q^2} \backslash \F_q $ and $c \in \F_{q^{2n}} \backslash \F_{q^2}$, respectively.

\noindent
\textbf{Case 1} Let $c \in \F_q \backslash \F_2$. Now, let $\displaystyle u := \Tr_{q^{2n}/q} (X^{2^i(q+1)}) + C \cdot \Tr_{q^{2n}/q}\left(((X^{q^2}+X)a^q)^{2^i}\right)$. Then, from Equation~\eqref{T1E3}, we have $\displaystyle X = \frac{u+B}{\gamma}$. Putting the value of $X$ into Equation~\eqref{T1E3}, we have
\begin{equation*}
\begin{split}
 u &=  \frac{1}{\gamma^{2^i(q+1)}}\Tr_{q^{2n}/q} \left( \left( u^2+(B^q+B)u+B^{q+1} \right)^{2^i}\right) + CB_1\\
 &=  \frac{1}{\gamma^{2^i(q+1)}}\Tr_{q^{2n}/q} (B^{2^i(q+1)}) + CB_1,\\
 \end{split}
\end{equation*}
where $\displaystyle B_1 = \Tr_{q^{2n}/q}\left(\left(\frac{(B^{q^2}+B)a^q}{\gamma}\right)^{2^i}\right) $ and the first equality holds as $u, \gamma^{q+1} \in \F_q$ and the second equality holds as $\Tr_{q^{2n}/q} (1)=0$ and $\Tr_{q^{2n}/q} (B^q+B)=0$. Thus, for $c \in \F_q \backslash \F_2$, Equation~\eqref{T1E3} has a unique solution for each $a\in \F_{q^{2n}} \backslash \F_q$ and $b \in \F_{q^{2n}}$. 

\noindent
\textbf{Case 2} Let $c \in \F_{q^2} \backslash \F_q $. Let $\displaystyle u:= \Tr_{q^{2n}/q} (X^{2^i(q+1)})$ and $v:= \Tr_{q^{2n}/q}\left(((X^{q^2}+X)a^q)^{2^i}\right)$. Then, from Equation~\eqref{T1E3}, we have $\displaystyle X = \frac{u+Cv+B}{\gamma}$. Using the properties of the trace function, we shall now simplify $u$ and $v$. Consider 
\begin{equation*}
\begin{split}
 u &= \frac{1}{\gamma^{2^i(q+1)}}\Tr_{q^{2n}/q} \left(\left(u^2+((B+Cv)^q+B+Cv)u+ C^{q+1}v^2+ (B^qC+BC^q)v +B^{q+1}\right)^{2^i}\right) \\
 &= \frac{1}{\gamma^{2^i(q+1)}}\Tr_{q^{2n}/q} \left(\left(((BC^q)^q+BC^q)v +B^{q+1}\right)^{2^i}\right) \\
 &= \frac{1}{\gamma^{2^i(q+1)}}\Tr_{q^{2n}/q} \left( B^{2^i(q+1)} \right),
 \end{split}
\end{equation*}
where the first equality holds because $u,v \in \F_q$ and $C \in \F_{q^2}$ and the second equality holds because $ C^{q+1} \in \F_q$, $\Tr_{q^{2n}/q} ((B+Cv)^{q\cdot 2^i})= \Tr_{q^{2n}/q}((B+Cv)^{2^i})$ and the third equality holds as $\Tr_{q^{2n}/q}((BC^q)^{q\cdot 2^i})=\Tr_{q^{2n}/q}((BC^q)^{2^i})$. Now consider $v$ which is given by
\begin{equation*}
\begin{split}
v &= \Tr_{q^{2n}/q}\left(\left(\left(\frac{u+Cv+B}{\gamma}\right)^{q^2}+\left(\frac{u+Cv+B}{\gamma}\right)\right)^{2^i}a^{q\cdot2^i}\right)=B_1,
 \end{split}
\end{equation*}
where the last equality holds as $u, v \in \F_q$ and $C, \gamma \in \F_{q^2}$. Thus, $c \in \F_{q^2} \backslash \F_q $, Equation~\eqref{T1E3} has a unique solution for each $a\in \F_{q^{2n}} \backslash \F_q$ and $b \in \F_{q^{2n}}$. 

\noindent
\textbf{Case 3} Let $c \in \F_{q^{2n}} \backslash \F_{q^2}$. Again, let $\displaystyle u:= \Tr_{q^{2n}/q} (X^{2^i(q+1)})$, $v:= \Tr_{q^{2n}/q}\left(((X^{q^2}+X)a^q)^{2^i}\right)$. Then, from Equation~\eqref{T1E3}, we have
\begin{equation}\label{T1E4}
 X = \frac{u+Cv+B}{\gamma}.
\end{equation}
We shall now simplify $u$ and $v$. Consider $u$ which is given by
\begin{equation*}
\begin{split}
 u =&~ \frac{1}{\gamma^{2^i(q+1)}}\Tr_{q^{2n}/q} \left((u^2+((B+Cv)^q+B+Cv)u+ C^{q+1}v^2+ (BC^q+B^qC)v +B^{q+1})^{2^i}\right) \\
=& \frac{1}{\gamma^{2^i(q+1)}} \left( v^{2^{i+1}} \cC + v^{2^i} \cB + \Tr_{q^{2n}/q}( B^{2^i(q+1)}) \right),\\
\end{split}
\end{equation*}
where $\displaystyle \cB = \Tr_{q^{2n}/q}((BC^q+B^qC)^{2^i})$, $\displaystyle \cC = \Tr_{q^{2n}/q} (C^{2^i(q+1)})$ and the first equality holds because $u,v \in \F_q$, and last equality holds as $\Tr_{q^{2n}/q} ((B+Cv)^{q\cdot 2^i})= \Tr_{q^{2n}/q}((B+Cv)^{2^i})$. Now consider $v$ which is given by 
\begin{equation*}
\begin{split}
v = \Tr_{q^{2n}/q}\left(\left(\left(\frac{u+Cv+B}{\gamma}\right)^{q^2}+\left(\frac{u+Cv+B}{\gamma}\right)\right)^{2^i}a^{q\cdot2^i}\right) = v^{2^i} C_1+B_1,
 \end{split}
\end{equation*}
where $\displaystyle C_1= \Tr_{q^{2n}/q}\left(\left(\frac{(C^{q^2}+C)a^q}{\gamma}\right)^{2^i}\right) $  and the last equality holds as $u, v \in \F_q$ and $ \gamma \in \F_{q^2}$. When $C_1=0$, then $v=B_1$ and by putting the values of $u$ and $v$ in Equation~\eqref{T1E4}, we get a unique solution
\begin{equation*}
  X = \frac{1}{\gamma^{2^i(q+1)+1}} \left( (B_1)^{2^{i+1}} \cC + (B_1)^{2^i} \cB + \Tr_{q^{2n}/q}( B^{2^i(q+1)}) \right) + \frac{(CB_1+B)}{\gamma}
\end{equation*}
of Equation~\eqref{T1E3}. In the event when $C_1 \neq 0$, consider the equation
\[
 v^{2^i}+ C_1^{-1}v+B_1 C_1^{-1} = 0.
\]
Since $\gcd(i,t)=1$, from Lemma~\ref{L1} the above equation can have at most two solutions $v \in \F_q$. Thus, for $c \in \F_{q^{2n}} \backslash \F_{q^2}$, Equation~\eqref{T1E3} can have at most two solutions for each $a\in \F_{q^{2n}} \backslash \F_q$ and $b \in \F_{q^{2n}}$. This completes the proof.
\end{proof}

\section{The $c$-differential uniformity of the function $g$}\label{S3}
In this section, we shall consider the $c$-differential uniformity of the function $g(X):= \gamma X +\Tr_{q^{2n}/q} (X^{q^2+1})$, where $\gamma \in  \F_{q}^*$, $q=2^t$ and $n, t >0$ are integers. In~\cite[Theorem 3.2]{LQCL2018}, Li et al. proved that $g$ is a permutation polynomial over $\F_{q^{2n}}$. The following theorem gives the $c$-differential uniformity of the function $g$ for all $c \in \F_{q^{2n}} \backslash \F_2$.

\begin{thm}\label{T2}
Let $q=2^t$ and $g(X)= \gamma X +\Tr_{q^{2n}/q} (X^{q^2+1})$, where $\gamma \in  \F_{q}^*$ and $n, t >0$ are integers. Then $g$ is P$c$N over $\F_{q^{2n}}$ for all $c\in \F_{q^{2n}} \backslash \F_2$
\end{thm}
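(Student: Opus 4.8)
The plan is to mirror the analysis in Theorem~\ref{T1}. Fix $c\in\F_{q^{2n}}\setminus\F_2$, set $C=\frac{1}{1+c}$, and recall that $_c\Delta_g(a,b)$ is the number of $X\in\F_{q^{2n}}$ solving $g(X+a)+cg(X)=b$. The case $a=0$ is immediate: the equation becomes $(1+c)g(X)=b$, and since $g$ is a permutation and $1+c\neq0$, there is a unique $X$. For $a\neq0$ I would expand $(X+a)^{q^2+1}=X^{q^2+1}+aX^{q^2}+a^{q^2}X+a^{q^2+1}$ and multiply through by $C$ to bring the equation to the form
\begin{equation*}
\gamma X+\Tr_{q^{2n}/q}(X^{q^2+1})+C\,\Tr_{q^{2n}/q}(aX^{q^2}+a^{q^2}X)=B,
\end{equation*}
with $B=C\big(b+\gamma a+\Tr_{q^{2n}/q}(a^{q^2+1})\big)$.

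Next, following the substitution technique of Theorem~\ref{T1}, I would set $u:=\Tr_{q^{2n}/q}(X^{q^2+1})\in\F_q$ and $v:=\Tr_{q^{2n}/q}(aX^{q^2}+a^{q^2}X)\in\F_q$, so that $X=\gamma^{-1}(B+u+Cv)$; here the hypothesis $\gamma\in\F_q^*$ is used repeatedly through $\gamma^{q^2}=\gamma$. Reinserting this $X$ into the defining expressions for $u$ and $v$ and simplifying with $\Tr_{q^{2n}/q}(Z^{q^2})=\Tr_{q^{2n}/q}(Z)$, $\Tr_{q^{2n}/q}(1)=0$, and $\Tr_{q^{2n}/q}(a+a^{q^2})=0$, I expect the $u$-relation to collapse into an explicit formula $u=\gamma^{-2}\big(\Tr_{q^{2n}/q}(B^{q^2+1})+v\,\cB+v^2\,\cC\big)$ with constants $\cB,\cC\in\F_q$, which pins down $u$ once $v$ is known. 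Crucially, because the exponent $q^2+1$ carries no extra Frobenius factor (in contrast to $f$), the variable $v$ enters its own relation only linearly, giving
\begin{equation*}
v\,\big(\gamma+\Tr_{q^{2n}/q}(aC^{q^2}+a^{q^2}C)\big)=\Tr_{q^{2n}/q}(aB^{q^2}+a^{q^2}B).
\end{equation*}
The statement then reduces to showing this single linear equation in $v$ has exactly one solution, whereupon $u$, and therefore $X$, is uniquely determined.

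Everything thus hinges on the coefficient $\gamma+\Tr_{q^{2n}/q}(aC^{q^2}+a^{q^2}C)$ being nonzero, and proving this is the step I expect to be the main obstacle. I would split on the location of $c$. If $c\in\F_{q^2}\setminus\F_2$ then $C\in\F_{q^2}$, so $C^{q^2}=C$ and the cross term becomes $\Tr_{q^{2n}/q}((a+a^{q^2})C)$; writing $\Tr_{q^{2n}/q}=\Tr_{q^2/q}\circ\Tr_{q^{2n}/q^2}$ and pulling $C$ through the inner trace, this vanishes because $\Tr_{q^{2n}/q^2}(a+a^{q^2})=0$, leaving the coefficient equal to $\gamma\neq0$ and yielding a unique solution. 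The genuinely delicate case is $c\in\F_{q^{2n}}\setminus\F_{q^2}$, where $C^{q^2}\neq C$ and I see no structural identity forcing $\Tr_{q^{2n}/q}(aC^{q^2}+a^{q^2}C)\neq\gamma$. Since $v$ appears only to the first power, there is no analogue of Lemma~\ref{L1} to fall back on: if the coefficient could vanish, the number of solutions would leap from $1$ straight to $q$. Pinning down this coefficient in the outer case is therefore the crux of the whole theorem, and it is the part I would examine most carefully, as the bare hypothesis $\gamma\in\F_q^*$ does not by itself appear to exclude the degeneracy.
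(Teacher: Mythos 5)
Your argument for $c\in\F_{q^2}\setminus\F_2$ is correct and complete, and it even unifies in one stroke the paper's Cases 1 and 2 (which treat $c\in\F_q\setminus\F_2$ and $c\in\F_{q^2}\setminus\F_q$ separately): when $C\in\F_{q^2}$ the cross term is $\Tr_{q^2/q}\bigl(C\,\Tr_{q^{2n}/q^2}(a+a^{q^2})\bigr)=0$, the coefficient of $v$ is $\gamma\neq0$, and $v$, then $u$, then $X=\gamma^{-1}(B+u+Cv)$ are uniquely determined. Note also that your linear relation is exactly the paper's: since $\Tr_{q^{2n}/q}(aC^{q^2}+a^{q^2}C)=\Tr_{q^{2n}/q}\bigl((C^{q^4}+C)a^{q^2}\bigr)=\gamma C_1$ in the paper's notation, your coefficient is $\gamma(1+C_1)$, and the paper's Case 3 relation $v=vC_1+B_1$ is your equation divided by $\gamma$. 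However, for $c\in\F_{q^{2n}}\setminus\F_{q^2}$ you stop short: you set up the right equation and then concede you cannot exclude $\Tr_{q^{2n}/q}(aC^{q^2}+a^{q^2}C)=\gamma$. As a proof of the stated theorem this is a genuine gap --- the outer case is simply not proved.

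The twist is that your misgiving is exactly right, and the gap cannot be closed, because the statement itself fails there. The paper's proof, at precisely this point, asserts ``from the above equation we get a unique solution for $v$,'' which silently assumes $C_1\neq1$. But the map $a\mapsto\Tr_{q^{2n}/q}\bigl((C^{q^4}+C)a^{q^2}\bigr)$ is $\F_q$-linear, vanishes on $\F_q$, and is nonzero whenever $C^{q^4}\neq C$, i.e.\ whenever $C\notin\F_{q^{\gcd(4,2n)}}$; in that case it is surjective onto $\F_q$ and takes the value $\gamma$ at $q^{2n-1}$ elements $a$, all outside $\F_q$. For such an $a$ the relation $v(1+C_1)=B_1$ degenerates to $0=B_1$, so for every $b$ the number of solutions $X$ is either $0$ or $q$ (the correspondence $X\leftrightarrow(u,v)$ is a bijection, and distinct $v$ give distinct $X$ because $C\notin\F_q$); since these counts sum to $q^{2n}$ over $b$, some $b$ admits exactly $q\geq2$ solutions, so $g$ is not P$c$N for that $c$. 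This happens for every $c\in\F_{q^{2n}}\setminus\F_{q^2}$ when $n\geq3$ is odd, and for every $c\notin\F_{q^4}$ when $n\geq4$ is even. The theorem is therefore only safe for $n\in\{1,2\}$: there $\F_{q^{2n}}\subseteq\F_{q^4}$ forces $C^{q^4}=C$, your troublesome trace term vanishes identically, and your coefficient is $\gamma\neq0$. In short, your proposal plus the single observation that $C^{q^4}=C$ for $n\leq2$ is a complete proof on exactly the range where the claim is true; on the rest of the stated range neither your approach nor the paper's can succeed, since the linear-algebra argument above produces counterexamples rather than a proof.
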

\begin{proof}
 Let $c \in \F_{q^{2n}} \backslash \F_{2}$. For any $(a,b) \in \F_{q^{2n}} \times \F_{q^{2n}}$ the $c$-DDT entry $_c \Delta_g (a,b)$ is given by the number of solutions $X \in \F_{q^{2n}}$ of the following equation.
\begin{equation}\label{T2E1}
\begin{split}
 \gamma X +\Tr_{q^{2n}/q} (X^{q^2+1}) + C\cdot \Tr_{q^{2n}/q}((X^{q^2}a+a^{q^2}X))&= B,
\end{split}
\end{equation}
where $\displaystyle B = C\left(b + \gamma a +\Tr_{q^{2n}/q} (a^{q^2+1})\right)$ and $\displaystyle C =  \frac{1}{1+c}$. When $a \in \F_{q}$, then the above equation reduces to $\gamma X +\Tr_{q^{2n}/q} (X^{q^2+1})= B$, which has a unique solution for each $B \in \F_{q^{2n}}$. Now, let $a \in \F_{q^{2n}} \backslash \F_q$ then Equation~\eqref{T2E1} can be rewritten as
\begin{equation}\label{T2E2}
 \gamma X +\Tr_{q^{2n}/q} (X^{q^2+1}) + C\cdot \Tr_{q^{2n}/q}((X^{q^4}+X)a^{q^2})= B.
\end{equation}
We shall split the analysis of the solutions of Equation~\eqref{T2E2} into three different cases, namely, $c \in \F_q \backslash \F_2$, $c \in \F_{q^2} \backslash \F_q$ and $c \in \F_{q^{2n}} \backslash \F_{q^2}$, respectively.

\noindent
\textbf{Case 1} Let $c \in \F_q \backslash \F_2$. Now, let $\displaystyle u:= \Tr_{q^{2n}/q} (X^{q^2+1}) + C \cdot \Tr_{q^{2n}/q}((X^{q^4}+X)a^{q^2})$. Then, from Equation~\eqref{T2E2}, we have $\displaystyle X = \frac{u+B}{\gamma}$. Putting the value of $X$ into Equation~\eqref{T2E2}, we have
\begin{equation*}
\begin{split}
 u &= \frac{1}{\gamma^2}\Tr_{q^{2n}/q} (u^{2}+(B^{q^2}+B)u+B^{q^2+1})
 + C\cdot \Tr_{q^{2n}/q}\left(\left(\frac{B^{q^4}+B}{\gamma}\right)a^{q^2}\right)\\
 &= \frac{1}{\gamma^2}\Tr_{q^{2n}/q} (B^{q^2+1})+CB_1,\\
 \end{split}
\end{equation*}
where $\displaystyle B_1 := \frac{1}{\gamma} \Tr_{q^{2n}/q}\left((B^{q^4}+B)a^{q^2}\right)$ and equalities hold because $u, \gamma \in \F_q$. Thus, for $c \in \F_{q} \backslash \F_2$, Equation~\eqref{T2E2} has a unique solution for each $a\in \F_{q^{2n}} \backslash \F_q$ and $b \in \F_{q^{2n}}$.

\noindent
\textbf{Case 2} Let $c \in \F_{q^2} \backslash \F_q$. In this case,  let $\displaystyle u:= \Tr_{q^{2n}/q} (X^{(q^2+1)})$ and  $v:= \Tr_{q^{2n}/q}((X^{q^4}+X)a^{q^2})$. Then, from Equation~\eqref{T2E2}, we have 
$\displaystyle X = \frac{u+Cv+B}{\gamma}$. Using properties of the trace function, we shall now simplify $u$ and $v$. Consider $u$ which is given by
\begin{equation*}
\begin{split}
 u =&~ \frac{1}{\gamma^2}\Tr_{q^{2n}/q} (u^2+(Cv)^2+(B^{q^2}+B)u+((BC)^{q^2}+BC)v +B^{q^2+1}) \\
 =& \frac{1}{\gamma^2} \left( v^2 \Tr_{q^{2n}/q} (C^2) + \Tr_{q^{2n}/q}( B^{q^2+1}) \right), \\
 \end{split}
\end{equation*}
where equalities hold because $u, v, \gamma \in \F_q$, $\Tr_{q^{2n}/q} (B^{q^2}+B)=0$ and $\Tr_{q^{2n}/q}((BC)^{q^2}+BC)=0$. Now consider $v$ which is given by
\begin{equation*}
\begin{split}
v &= \Tr_{q^{2n}/q}\left(\left(\left(\frac{u+Cv+B}{\gamma}\right)^{q^4}+\left(\frac{u+Cv+B}{\gamma}\right)\right)a^{q^2}\right) =B_1
\end{split}
\end{equation*}
where last equality holds as $u, v, \gamma \in \F_{q}$ and $C \in \F_{q^2}$. Thus, for $c \in \F_{q^2} \backslash \F_q$, Equation~\eqref{T2E2} has a unique solution for each $a\in \F_{q^{2n}} \backslash \F_q$ and $b \in \F_{q^{2n}}$.

\noindent 
\textbf{Case 3} Let $c \in \F_{q^{2n}} \backslash \F_{q^2}$. Again, let $\displaystyle u:= \Tr_{q^{2n}/q} (X^{(q^2+1)})$ and  $v:= \Tr_{q^{2n}/q}((X^{q^4}+X)a^{q^2})$. Then, from Equation~\eqref{T2E2}, we have
\begin{equation}\label{T2E3}
 X = \frac{u+Cv+B}{\gamma}.
\end{equation}
Now, consider $u$ which is given by 
\begin{equation*}
\begin{split}
  u =&~ \frac{1}{\gamma^2}\Tr_{q^{2n}/q} \left(u^2+((B+Cv)^{q^2}+B+Cv)u+ C^{q^2+1}v^2+ (BC^{q^2}+B^{q^2}C)v +B^{q^2+1}\right) \\
  =& \frac{1}{\gamma^2} \left( v^2 \cC + v\cB + \Tr_{q^{2n}/q}( B^{q^2+1}) \right),\\
\end{split}
\end{equation*}
where $\displaystyle \cB = \Tr_{q^{2n}/q}(BC^{q^2}+B^{q^2}C)$, $\displaystyle \cC = \Tr_{q^{2n}/q} (C^{q^2+1})$ and equalities hold because $u,v \in \F_q$ and $\Tr_{q^{2n}/q} ((B+Cv)^{q^2}+B+Cv)=0$. Now consider $v$ which is given by 
\begin{equation*}
\begin{split}
v&= \Tr_{q^{2n}/q}\left(\left(\left(\frac{u+Cv+B}{\gamma}\right)^{q^4}+\left(\frac{u+Cv+B}{\gamma}\right)\right)a^{q^2}\right) = vC_1+B_1,
\end{split}
\end{equation*}
where $\displaystyle C_1= \frac{1}{\gamma} \Tr_{q^{2n}/q}\left((C^{q^4}+C)a^{q^2}\right) $  and the last equality holds as $u, v, \gamma \in \F_{q}$. From above equation we get a unique solution for $v$. Thus, for $c \in \F_{q^{2n}} \backslash \F_{q^2}$, Equation~\eqref{T2E2} has a unique solution for each $a\in \F_{q^{2n}} \backslash \F_q$ and $b \in \F_{q^{2n}}$. This completes the proof.
\end{proof}

\section{The $c$-differential uniformity of the function $h$}\label{S4}

In this section, we shall consider the $c$-differential uniformity of a more general class of permutations than $g$. Let $h(X)= \gamma X +\Tr_{q^{2n}/q} (X^{2^i(q^2+1)})$, where $\gamma \in  \F_{q^2}^*$, $q=2^t$ and $i,n, t >0$ are integers. It has been proved in~\cite[Theorem 3.3]{LQCL2018} that $h$ is a permutation polynomial over $\F_{q^{2n}}$ if either $n$ is even; or $n$ is odd and $\displaystyle \Tr_{q^2/q} \left( \frac{1}{\gamma^{2^i+1}} \right)^{\frac{q-1}{\gcd(2^{i+1}-1, 2^t-1)}} \neq 1$. The following theorem gives the $c$-differential uniformity of the function $h$ for all $c \in \F_{q^{2n}} \backslash \F_2$.

\begin{thm}\label{T3}
Let $q=2^t$ and $h(X)= \gamma X +\Tr_{q^{2n}/q} (X^{2^i(q^2+1)})$, where $\gamma \in  \F_{q^2}^*$, $i,n, t >0$ are integers and either $n$ is even; or $n$ is odd and $\displaystyle \Tr_{q^2/q} \left( \frac{1}{\gamma^{2^i+1}} \right)^{\frac{q-1}{\gcd(2^{i+1}-1, 2^t-1)}} \neq 1$. Then 
\begin{enumerate}[(i)]
 \item $h$ is P$c$N over $\F_{q^{2n}}$ for all $c\in \F_{q^2} \backslash \F_2$,
 \item $_c\Delta_h \leq 2$ for all $c\in \F_{q^{2n}} \backslash \F_{q^2}$ when $\gcd(i,t)=1$.
\end{enumerate}
\end{thm}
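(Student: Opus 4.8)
The plan is to follow exactly the template established in the proofs of Theorem~\ref{T1} and Theorem~\ref{T2}, since the function $h$ differs from $g$ only in the extra exponent $2^i$ and in allowing $\gamma \in \F_{q^2}^*$ rather than $\gamma \in \F_q^*$. I would begin by writing the $c$-DDT equation $h(X+a)+ch(X)=b$ and, after dividing by $1+c$ (valid because $c\neq 1$), rewrite it in the form
\begin{equation*}
 \gamma X +\Tr_{q^{2n}/q}(X^{2^i(q^2+1)}) + C\cdot \Tr_{q^{2n}/q}\!\left(\bigl((X^{q^4}+X)a^{q^2}\bigr)^{2^i}\right)= B,
\end{equation*}
with $C=1/(1+c)$ and $B = C\bigl(b+\gamma a+\Tr_{q^{2n}/q}(a^{2^i(q^2+1)})\bigr)$, after using that the cross term $X^{q^2}a+a^{q^2}X$ can be folded, via $\Tr_{q^2/q}$-type identities and the Frobenius-invariance of the trace, into the displayed $(X^{q^4}+X)a^{q^2}$ shape raised to $2^i$. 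As before, the cases $a=0$ and $a\in\F_q^*$ (here more carefully $a\in\F_q$, where the cross term trace vanishes) give a unique solution immediately, so the work is for $a\in\F_{q^{2n}}\backslash\F_q$.

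Next I would set $u:=\Tr_{q^{2n}/q}(X^{2^i(q^2+1)})$ and $v:=\Tr_{q^{2n}/q}\!\left(((X^{q^4}+X)a^{q^2})^{2^i}\right)$, both of which lie in $\F_q$, so that $X=(u+Cv+B)/\gamma$. The heart of the argument is to substitute this expression for $X$ back into the definitions of $u$ and $v$ and simplify. For part~(i), $c\in\F_{q^2}\backslash\F_2$ means $C\in\F_{q^2}$; I expect the computation of $u$ to collapse (exactly as in Case~2 of Theorem~\ref{T2}) to $u=\gamma^{-2^i(q^2+1)}\Tr_{q^{2n}/q}(B^{2^i(q^2+1)})$, a constant, because the $v$-dependent terms vanish under the trace when $C^{q^2+1}\in\F_q$ and the appropriate linear-combination traces $\Tr_{q^{2n}/q}((B+Cv)^{q^2}+\cdots)$ are zero; and the expression for $v$ should reduce to a constant $B_1$ since $C\in\F_{q^2}$ makes the coefficient $C_1$ of $v$ vanish. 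This yields a unique $X$, giving P$c$N. For part~(ii), $c\in\F_{q^{2n}}\backslash\F_{q^2}$ and $\gcd(i,t)=1$: here $u$ becomes a quadratic-in-$v$ expression $\gamma^{-2^i(q^2+1)}\bigl(v^{2}\cC + v\cB + \Tr_{q^{2n}/q}(B^{2^i(q^2+1)})\bigr)$ (with $\cC=\Tr_{q^{2n}/q}(C^{2^i(q^2+1)})$, $\cB=\Tr_{q^{2n}/q}((BC^{q^2}+B^{q^2}C)^{2^i})$ up to the $2^i$-power bookkeeping), while $v$ satisfies $v=v^{2^i}C_1+B_1$ with $C_1=\Tr_{q^{2n}/q}\!\left(((C^{q^4}+C)a^{q^2}/\gamma)^{2^i}\right)$. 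When $C_1=0$ this forces $v=B_1$ uniquely; when $C_1\neq 0$, the equation $v^{2^i}+C_1^{-1}v+B_1C_1^{-1}=0$ is an affine $\F_q$-polynomial to which Lemma~\ref{L1} applies, bounding the number of $v\in\F_q$ (hence of $X$) by two.

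The main obstacle I anticipate is the bookkeeping introduced by the exponent $2^i$ together with $\gamma\in\F_{q^2}^*$ (as opposed to $\gamma\in\F_q^*$ in Theorem~\ref{T2}). In Theorem~\ref{T2} many simplifications rode on $\gamma\in\F_q$, so that dividing by $\gamma$ and raising to powers interacted cleanly with the trace; here $\gamma$ only lies in $\F_{q^2}$, so I must verify that the Frobenius $X\mapsto X^{q^2}$ fixes $\gamma^{q^2+1}$ and the relevant norm-type quantities, and that the $2^i$-th powers distribute correctly through $\Tr_{q^{2n}/q}$ using $\Tr_{q^{2n}/q}(Y^{q\cdot 2^i})=\Tr_{q^{2n}/q}(Y^{2^i})$. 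Concretely, the delicate identities are of the form $\Tr_{q^{2n}/q}\bigl((B+Cv)^{q^2\cdot 2^i}\bigr)=\Tr_{q^{2n}/q}\bigl((B+Cv)^{2^i}\bigr)$ and the vanishing of $\Tr_{q^{2n}/q}\bigl(((BC^{q^2}+B^{q^2}C))^{2^i}\bigr)$-type terms in the $\F_{q^2}$ case; getting the exact shape of $u$ and $v$ right, and confirming that $C_1=0$ precisely when $C\in\F_{q^2}$ (so that part~(i) has no residual freedom), is where the care is needed. Once these trace identities are checked, the structural conclusion follows verbatim from the reasoning of Case~3 of Theorem~\ref{T1}: the affine equation in $v$ governed by Lemma~\ref{L1} gives the bound $_c\Delta_h\leq 2$.
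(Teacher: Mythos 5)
Your proposal reproduces the structural skeleton of the paper's proof (same substitution $u,v$, same three-case split on $c$, same use of Lemma~\ref{L1} for the affine equation in $v$), but it has a genuine gap: you never use the theorem's hypothesis that either $n$ is even, or $n$ is odd and $\Tr_{q^2/q}$ of the relevant power of $1/\gamma$ fails to be a $(2^{i+1}-1)$-th power condition. That hypothesis is not decorative; it is exactly what makes the equation for $u$ solvable uniquely, and it is the one place where Theorem~\ref{T3} genuinely differs from Theorem~\ref{T2}. Concretely, your claim that for $c\in\F_{q^2}\backslash\F_2$ the computation of $u$ ``collapses, exactly as in Case~2 of Theorem~\ref{T2}, to a constant'' is false when $n$ is odd. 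Because $\gamma$ lies only in $\F_{q^2}$, one must factor the trace as $\Tr_{q^{2n}/q}=\Tr_{q^{2}/q}\circ\Tr_{q^{2n}/q^2}$ and pull $\gamma^{-2^{i+1}}$ out of the inner trace; the term $u^{2^{i+1}}/\gamma^{2^{i+1}}$ then contributes
\begin{equation*}
u^{2^{i+1}}\,\Tr_{q^{2}/q}\!\left(\frac{1}{\gamma^{2^{i+1}}}\Tr_{q^{2n}/q^2}(1)\right),
\qquad \Tr_{q^{2n}/q^2}(1)=n \bmod 2 ,
\end{equation*}
so for odd $n$ the equation for $u$ reads $u=u^{2^{i+1}}T_1+(\text{terms in } v,B)$ with $T_1=\Tr_{q^{2}/q}\bigl(1/\gamma^{2^{i+1}}\bigr)$, not $u=\text{constant}$. (In Theorem~\ref{T2} this term vanished automatically because there $i=0$ in effect played no role and, more importantly, $\gamma\in\F_q^*$ let one keep $1/\gamma^2$ outside the full trace.)

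The paper resolves this in every one of the three cases by invoking the hypothesis: the map $u\mapsto u+T_1u^{2^{i+1}}$ is an $\F_2$-linearized bijection of $\F_q$ precisely when $u^{2^{i+1}-1}=1/T_1$ has no root in $\F_q^*$, which is equivalent to $T_1^{(q-1)/\gcd(2^{i+1}-1,\,2^t-1)}\neq 1$; when $n$ is even the troublesome term is absent and no condition is needed. Your Case~3 has the same omission: after bounding the solutions $v$ by two via Lemma~\ref{L1}, you still must argue that each such $v$ determines a \emph{unique} $u$, and for odd $n$ that again requires the hypothesis. Without this ingredient your argument proves nothing for odd $n$, and the anticipated ``bookkeeping'' difficulties you list (Frobenius-invariance of $\gamma^{q^2+1}$, distributing $2^i$-th powers through the trace) are not where the actual difficulty lies. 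The rest of your outline (the reduction for $a\in\F_q$, the form of $C_1$ and its vanishing for $C\in\F_{q^2}$, the use of Lemma~\ref{L1}) does match the paper.
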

\begin{proof}
 Let $c \in \F_{q^{2n}} \backslash \F_{2}$. For any $(a,b) \in \F_{q^{2n}} \times \F_{q^{2n}}$ the $c$-DDT entry $_c \Delta_h (a,b)$ is given by the number of solutions $X \in \F_{q^{2n}}$ of the following equation.
 \begin{equation}\label{T3E1}
\begin{split}
\gamma X +\Tr_{q^{2n}/q} (X^{2^i(q^2+1)}) + C \cdot \Tr_{q^{2n}/q}((X^{q^2}a+a^{q^2}X)^{2^i})&= B,
\end{split}
\end{equation}
where $\displaystyle B = C(b + \gamma a +\Tr_{q^{2n}/q} (a^{2^i(q^2+1)}))$ and $\displaystyle C = \frac{1}{1+c}$. Notice that when $a \in \F_{q}$, Equation~\eqref{T3E1} reduces to $B = \gamma X +\Tr_{q^{2n}/q} (X^{2^i(q^2+1)})$,   which has a unique solution for each $B \in \F_{q^{2n}}$. Now, let $a \in \F_{q^{2n}} \backslash \F_q$. Then, from Equation~\eqref{T3E1} we have
\begin{equation}\label{T3E2}
 \gamma X +\Tr_{q^{2n}/q} (X^{2^i(q^2+1)}) + C \cdot \Tr_{q^{2n}/q}\left(((X^{q^4}+X)a^{q^2})^{2^i}\right)= B.
\end{equation}
Now we shall split the analysis of the solutions of Equation~\eqref{T3E2} into three different cases, namely, $c \in \F_q \backslash \F_2$, $c \in \F_{q^2} \backslash \F_q$ and $c \in \F_{q^{2n}} \backslash \F_{q^2}$, respectively.

\noindent
\textbf{Case 1} Let $c \in \F_q \backslash \F_2$ and $\displaystyle u:= \Tr_{q^{2n}/q} (X^{2^i(q^2+1)}) + C \cdot \Tr_{q^{2n}/q}\left(((X^{q^4}+X)a^{q^2})^{2^i}\right).$ Then, from Equation~\eqref{T3E2}, we have $\displaystyle X = \frac{u+B}{\gamma}$. Putting the value of $X$ into Equation~\eqref{T3E2}, we have
\begin{equation*}
\begin{split}
 u &=  \Tr_{q^{2n}/q} \left( \left(\frac{u^{2}+(B^{q^2}+B)u+B^{q^2+1}}{\gamma^2}\right)^{2^i}\right)  + C_1 \\
 &=  \Tr_{q^{2}/q} \left( \Tr_{q^{2n}/q^2} \left( \left(\frac{u^{2}+(B^{q^2}+B)u+B^{q^2+1}}{\gamma^2}\right)^{2^i}\right) \right) +  C_1 \\
 &=  \Tr_{q^{2}/q} \left(  \left(\frac{u}{\gamma}\right)^{2^{i+1}}\Tr_{q^{2n}/q^2} (1) \right)+\Tr_{q^{2n}/q} \left( \left(\frac{B}{\gamma}\right)^{2^i(q^2+1)}\right)  +  C_1 \\
 &=
 \begin{cases}
  B_1+ C_1~&~\mbox{when}~n~\mbox{is even} \\
  u^{2^{i+1}}T_1 +B_1 +  C_1~&~\mbox{when}~n~\mbox{is odd},
 \end{cases}
 \end{split}
\end{equation*}
where $\displaystyle B_1 :=  \Tr_{q^{2n}/q} \left( \left(\frac{B}{\gamma}\right)^{2^i(q^2+1)}\right)$, $\displaystyle C_1 := C \cdot \Tr_{q^{2n}/q}\left(\left(\frac{(B^{q^4}+B)a^{q^2}}{\gamma}\right)^{2^i}\right)$ and $\displaystyle T_1 :=\Tr_{q^{2}/q} \left(\frac{1}{\gamma^{2^{i+1}}} \right)$. Notice that when $n$ is odd, $\displaystyle T_1^{\frac{q-1}{\gcd(2^{i+1}-1, 2^t-1)}} \neq 1$ and hence we get only one solution $u \in \F_q$.  Thus, for $c \in \F_q \backslash \F_2$, Equation~\eqref{T3E2} has a unique solution for each $a\in \F_{q^{2n}} \backslash \F_q$ and $b \in \F_{q^{2n}}$.

\noindent
\textbf{Case 2} Let $c \in \F_{q^2} \backslash \F_q$. Let $\displaystyle u:= \Tr_{q^{2n}/q} \left(X^{2^i(q^2+1)}\right)$ and $v:= \Tr_{q^{2n}/q}\left(((X^{q^4}+X)a^{q^2})^{2^i}\right).$ Then, from Equation~\eqref{T3E2}, we have 
\begin{equation}\label{T3E3}
 X = \frac{u+Cv+B}{\gamma}.
\end{equation}
Using the properties of the trace function, we shall now simplify $u$ and $v$. Consider $u$ which is given by
\begin{equation*}
\begin{split}
 u &= \Tr_{q^{2n}/q} \left(\left(\frac{u^2+(B^{q^2}+B)u+ C^2v^2+ ((BC)^{q^2}+BC)v +B^{q^2+1}}{\gamma^2} \right)^{2^i}\right) \\
 &= \Tr_{q^{2}/q} \left( \frac{1}{\gamma^{2^{i+1}}} \Tr_{q^{2n}/q^2} \left(\left(u^2+(B^{q^2}+B)u+ C^2v^2+ ((BC)^{q^2}+BC)v +B^{q^2+1} \right)^{2^i}\right) \right) \\
 &= \Tr_{q^{2}/q} \left( \frac{1}{\gamma^{2^{i+1}}} \Tr_{q^{2n}/q^2} \left(\left(u^2+ C^2v^2+B^{q^2+1} \right)^{2^i}\right) \right) \\
 &= \Tr_{q^{2}/q} \left(\frac{u^{2^{i+1}}}{\gamma^{2^{i+1}}} \Tr_{q^{2n}/q^2} \left(1\right)+ \frac{v^{2^{i+1}}}{\gamma^{2^{i+1}}} \Tr_{q^{2n}/q^2} \left(C^{2^{i+1}}\right) +\frac{1}{\gamma^{2^{i+1}}} \Tr_{q^{2n}/q^2} \left(B^{2^i(q^2+1)}\right) \right) \\
 &=
 \begin{cases}
  v^{2^{i+1}}C_2+B_1~&~\mbox{when}~n~\mbox{is even} \\
  u^{2^{i+1}}T_1+v^{2^{i+1}}C_2+B_1~&~\mbox{when}~n~\mbox{is odd},
 \end{cases}
 \end{split}
\end{equation*}
where $\displaystyle C_2 := \Tr_{q^{2n}/q} \left(\left(\frac{ C}{\gamma} \right)^{2^{i+1}}\right)$. Again, since $\displaystyle T_1^{\frac{q-1}{\gcd(2^{i+1}-1, 2^t-1)}} \neq 1$ when $n$ is odd, for any fixed $v$, we get a unique solution $u \in \F_q$ of the above equation. Now consider $v$ which is given by
\begin{equation*}
\begin{split}
v &= \Tr_{q^{2n}/q}\left(\left(\left(\frac{u+Cv+B}{\gamma}\right)^{q^4}+\left(\frac{u+Cv+B}{\gamma}\right)\right)^{2^i}a^{q^22^i}\right)= \frac{C_1}{C}.
 \end{split}
\end{equation*}
Now putting the values of $u, v$ into Equation~\eqref{T3E3}, we get a unique solution $X \in \F_{q^{2n}}$ of Equation~\eqref{T3E2}. Thus, for $c \in \F_{q^2} \backslash \F_q$, Equation~\eqref{T3E2} has a unique solution for each $a\in \F_{q^{2n}} \backslash \F_q$ and $b \in \F_{q^{2n}}$.

\noindent 
\textbf{Case 3} Let $c \in \F_{q^{2n}} \backslash \F_{q^2}$. Again, in this case too, let $\displaystyle u:= \Tr_{q^{2n}/q} \left(X^{2^i(q^2+1)}\right)$ and $v:= \Tr_{q^{2n}/q}\left(((X^{q^4}+X)a^{q^2})^{2^i}\right).$ Then, from Equation~\eqref{T3E2}, we have $\displaystyle X = \frac{u+Cv+B}{\gamma}.$ Now consider $u$ which is given by
\begin{equation}\label{Fu}
\begin{split}
  u &= \Tr_{q^{2n}/q} \left( \left(\frac{u^2+((B+Cv)^{q^2}+B+Cv)u+ C^{q^2+1}v^2+ (C^{q^2}B+CB^{q^2})v +B^{q^2+1}}{\gamma^2} \right)^{2^i} \right) \\
  &= \Tr_{q^{2}/q} \left(\frac{1}{\gamma^{2^{i+1}}} \Tr_{q^{2n}/{q^2}} \left((u^2+ C^{q^2+1}v^2+ (C^{q^2}B+CB^{q^2})v +B^{q^2+1})^{2^i} \right) \right) \\
  &= u^{2^{i+1}}\Tr_{q^{2}/q} \left(\frac{1}{\gamma^{2^{i+1}}} \Tr_{q^{2n}/{q^2}} (1)\right) +v^{2^{i+1}} \Tr_{q^{2n}/{q}} \left(\left(\frac{C}{\gamma}\right)^{2^i(q^2+1)} \right)\\
  &\quad \quad +v^{2^i} \Tr_{q^{2n}/{q}} \left(\left(\frac{C^{q^2}B+CB^{q^2}}{\gamma^2}\right)^{2^i} \right)+ \Tr_{q^{2n}/{q}} \left(\left(\frac{B}{\gamma}\right)^{2^i(q^2+1)} \right)  \\
  &= 
  \begin{cases}
   v^{2^{i+1}} \cC  + v^{2^i} \cB + B_1~&~\mbox{when}~n~\mbox{is even} \\
   u^{2^{i+1}}T_1+v^{2^{i+1}} \cC  + v^{2^i} \cB + B_1 ~&~\mbox{when}~n~\mbox{is odd},\\
  \end{cases}  
\end{split}
\end{equation}
where $\displaystyle  \cB = \Tr_{q^{2n}/q} \left( \left(\frac{BC^{q^2}+B^{q^2}C}{\gamma^2} \right)^{2^i} \right) $, $\displaystyle \cC = \Tr_{q^{2n}/q}\left( \left(\frac{C}{\gamma}\right)^{2^i(q^2+1)}\right)$ and these equalities hold because $u,v \in \F_q$, $\gamma \in \F_{q^2}^*$ and $\Tr_{q^{2n}/q} ((B+Cv)^{q^2})= \Tr_{q^{2n}/q}(B+Cv)$. Now consider $v$ which is given by 
\begin{equation}\label{Fv}
v = \Tr_{q^{2n}/q}\left(\left(\left(\frac{u+Cv+B}{\gamma}\right)^{q^4}+\left(\frac{u+Cv+B}{\gamma}\right)\right)a^{q^2}\right)=  v^{2^i}C_3+ \frac{C_1}{C},
\end{equation}
where $\displaystyle C_3= \Tr_{q^{2n}/q}\left(\left(\frac{(C^{q^4}+C)a^{q^2}}{\gamma}\right)^{2^i}\right)  $  and the second equality holds as $u, v \in \F_{q}$ and $\gamma \in \F_{q^2}^*$. Now, if $C_3=0$, then we have a unique solution $v \in \F_q$ of Equation~\eqref{Fv}. Thus, Equation~\eqref{T3E2} has a unique solution in this case. If $C_3 \neq 0$, then Equation~\eqref{Fv} can be rewritten as
$$\displaystyle v^{2^i}+ \frac{v}{C_3} +\frac{C_1}{CC_3}=0.$$
From Lemma~\ref{L1}, we know that if $\gcd(i,t)=1$, then the above equation can have at most two solutions $v \in \F_q$. Also, corresponding to each solution $v$ of Equation~\eqref{Fv}, we get a unique solution $u$ of Equation~\eqref{Fu}. Thus, for $c \in \F_{q^{2n}} \backslash \F_{q^2}$, Equation~\eqref{T3E2} has at most two solutions for each $a\in \F_{q^{2n}} \backslash \F_q$ and $b \in \F_{q^{2n}}$. This completes the proof.
\end{proof}

\section{Conclusion}\label{S5}
We have cosntructed three new classes of P$c$N permutations and two new classes of AP$c$N permutations for some specific values of $c$. Since there are only few P$c$N and AP$c$N permutations over finite fields of even characteristic, it would be an interesting problem to construct more such permutations.

\end{document}